\newtheorem{theorem}{Theorem}
\begin{document}

\title{Estimation of Source and Receiver Positions, Room Geometry and Reflection Coefficients From a Single Room Impulse Response}

\author{Wangyang~Yu, W. Bastiaan~Kleijn,~\IEEEmembership{Fellow,~IEEE}
\thanks{Wangyang~Yu  is with the Dept. of Microelectronics, Delft University of Technology, The Netherlands. Email: W.Yu-1@tudelft.nl.}
\thanks{W. Bastiaan~Kleijn is with the Dept. of Microelectronics, Delft University of Technology, The Netherlands, and with the Faculty of Engineering, Victoria University of Wellington, New Zealand. Email: w.b.kleijn@tudelft.nl.}}

\maketitle

\begin{abstract}
We propose an algorithm to estimate source and receiver positions, room geometry and reflection coefficients from a single room impulse response simultaneously. It is based on a symmetry analysis of the room impulse response. The proposed method utilizes the times of arrivals of the direct path, first order reflections and second order reflections. The proposed method is robust to erroneous pulses and non-specular reflections. It can be applied to any room with parallel walls as long as the required arrival times of reflections are available. In contrast to the state-of-art method, we do not restrict the location of source and receiver.
\end{abstract}

\begin{IEEEkeywords}
Room impulse response, room geometry, source/receiver positions, reflection coefficient.
\end{IEEEkeywords}

\IEEEpeerreviewmaketitle

\section{Introduction}

\IEEEPARstart{A}{ccurate} acoustic environment modelling forms an important aspect of room acoustics. It has a variety of applications such as speech enhancement \cite{397090, loizou2013speech, HU2007588}, speech recognition \cite{4518676, LIU201565, 6334330}, and sound rendering \cite{4042282, merimaa2005spatial, 7005804}. The room impulse response (RIR), the transfer function between the source and receiver, characterises the room acoustic environment. It is affected by a set of room acoustic attributes including room geometry, the positions of source and receiver, and the reflection coefficients. In this paper, we aim to analyse RIRs and derive the room geometry, the positions of source and receiver, and the reflection coefficients given a single RIR.

The image source method \cite{doi:10.1121/1.382599, kuttruff2014room,begault20003, MCGOVERN2009182, habets2014} is an efficient and widely used RIR modelling method, which was first proposed in \cite{doi:10.1121/1.382599}. This method can be applied in empty rectangular rooms. It is assumed that the sound propagates along straight lines. We use $\mathbf{p, m}$ to label each reflection where each element of $\mathbf{p} = (q, j, k)$ can take a value of $0$ or $1$, indicating the direction of the reflection, and each element of $\mathbf{m}= (m_x, m_y, m_z)$ can take an integer value, associated with the reflection order $O_{\mathbf{p,m}}$ as
\begin{equation}
O_{\mathbf{p,m}} = |2m_x-q| + |2m_y-j| + |2m_z-k|.
\end{equation}
Let $\beta_{x_1},\beta_{x_2},\beta_{y_1},\beta_{y_2},\beta_{z_1},\beta_{z_2}$ denote the six reflection coefficients of the walls, $\tau_{\mathbf{p, m}}$ denote the time of arrival (TOA) of each reflection, and $d_{\mathbf{p, m}}$ denote corresponding path length. Then with the image source method, assuming that the reflection coefficients are finite and constant over each wall, the RIR can be written as 
\begin{equation}\label{RIRIS}\scriptsize
h(t) = \sum_\mathbf{p, m} \beta_{x_1}^{|m_x-q|} \beta_{x_2}^{|m_x|} \beta_{y_1}^{|m_y-j|} \beta_{y_2}^{|m_y|} \beta_{z_1}^{|m_z-k|} \beta_{z_2}^{|m_z|}\frac{\delta(t-\tau_{\mathbf{p, m}})}{4\pi d_{\mathbf{p, m}}.}
\end{equation}

A number of algorithms to estimate the room geometry from RIRs exist \cite{Dokmani12186,7471691, 6811439, 8214218}. \cite{Dokmani12186} estimates the room geometry by exploiting the properties of Euclidean distance matrices. It requires a single source and four receivers whose pairwise distances are known.  The room geometry is estimated with one source and five receivers in \cite{7471691} and the method can achieve approximately $1$ cm accuracy. \cite{6811439} adopts a source-centered coordinate system and uses a single RIR to estimate the room geometry. However, it requires the TOAs to be labelled to the corresponding image source. \cite{8214218} infers room geometry with sets of TOAs from RIRs between one source and an array of receivers with knowledge of the receiver array geometry. The sets of TOAs are detected and labelled to estimate the positions of source and image sources, which can be used to estimate the room geometry afterward. The RIR can also be used for indoor localisation \cite{6853834, 7394316, 5427443}. \cite{6853834} locates the source with a single RIR with Euclidean distance matrices by assuming that both the room geometry and the receiver position are known. \cite{7394316} proposes a source localisation method using the parameters that are extracted from RIRs with an ad-hoc microphone array. The RIRs can also be used for localisation based on the fingerprint \cite{5427443}. \cite{5946405} estimates the room geometry and the source position using the first and second order reflections from one RIR. In contrast to the present work, it assumes that the source and receiver are co-located. Our previous paper estimates room geometry and reflection coefficients from a single RIR using deep neural networks \cite{9286412}.

Our contribution is the introduction of an analytical method to estimate the room geometry, the positions of source and receivers, and reflection coefficients simultaneously using a single RIR without any prior information. The method applies to rooms with parallel wall pairs. The paper is organised as follows. The algorithm to estimate room acoustical parameters is described in Section \Romannum{2}. In Section \Romannum{3}, we analyse a RIR focusing on degeneracy. We discuss our experiment in Section \Romannum{4} and conclude the paper in Section \Romannum{5}.

\section{Room acoustical parameters estimation}
\label{sec:paraestimate}
In this section, we propose a method to determine a particular valid configuration of the room acoustical parameters from a single room impulse response. In section \Romannum{3} we will show how to map a valid configuration to any other valid configuration. Our methods apply to rooms with sets of parallel walls. In addition, we assume the pulses of the direct path, first and second order reflections are available in the observed RIR. Their TOAs can be detected with the methods proposed in \cite{6809961,7096384, ristic2013detection, Risti2014ImprovementOT}. The path lengths of reflections can be computed from the TOAs and the speed of sound. If a set of reflections only reflect on one parallel pair of walls, we say that they belong to the same direction. 

Our proposed method identifies the reflections for each direction and then computes the wall-pair distance with the path lengths of reflections in this direction. The proposed method is based on the image source method and the degeneracy of room impulse responses, which we will discuss in detail in the next section. We first describe our theorem to identify the directions of reflections in the first subsection. In the second subsection, we propose an algorithm to identify reflections. We then compute room geometry and source / receiver positions. In the fourth subsection, we describe how to compute the reflection coefficients.

\subsection{Identifying directions of reflections}

In this subsection, we introduce a theorem to identify the directions of reflections given a set of unlabelled path lengths of the reflections. This theorem is used to classify higher order reflections into two sets, i.e., a multi-direction set and a single-direction set. 

\begin{theorem}
Let $d_{ij}$, $d_i$, $d_j$, and $d_0$ denote the path lengths of the $(O_i+O_j)$-th order reflection that reflects on two directions $i$ and $j$, $O_i$-th order reflections on direction $i$ and $O_j$-th order reflections on direction $j$, and the direct path. Then $d_{ij}^2+d_0^2 = d_i^2+d_j^2$ holds. Vice versa, if there exists a path length of $(O_i+O_j)$-th order reflection $d_{ij}$ that satisfies the equation, the reflections corresponding to path lengths $d_i$ and $d_j$ belong to different directions.
\end{theorem}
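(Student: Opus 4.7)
My plan is to invoke the image source method, which realises each reflection's path length as the Euclidean distance from some image of the source to the receiver. Orienting the coordinate axes along the three pairs of parallel walls, the image associated with parameters $(\mathbf{p},\mathbf{m})$ has coordinates $(X_{q,m_x},Y_{j,m_y},Z_{k,m_z})$ whose components are decoupled by axis: the $x$-component is $(-1)^q x_s+2m_xL_x$ and depends only on the $x$-reflection parameters, and analogously for $y,z$. Setting $A_{q,m_x}=(X_{q,m_x}-x_r)^2$ and likewise $B_{j,m_y}$, $C_{k,m_z}$, with subscript $0$ reserved for the original source, every squared path length splits additively as
\begin{equation*}
d_{\mathbf{p},\mathbf{m}}^{\,2} \;=\; A_{q,m_x}+B_{j,m_y}+C_{k,m_z}.
\end{equation*}
This per-axis separability is the only structural fact the proof requires.

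With this in hand the forward direction is a one-line identity. Align direction $i$ with the $x$-axis and direction $j$ with the $y$-axis, abbreviate the direct-path components as $A_0,B_0,C_0$, and let $A_i,B_j$ be the image displacements realising $d_i,d_j$. The hypothesis that $d_{ij}$ reflects in directions $i,j$ only, of order $O_i$ in $i$ and $O_j$ in $j$, gives $d_{ij}^2=A_i+B_j+C_0$, so
\begin{equation*}
d_{ij}^2+d_0^2 \;=\; A_i+A_0+B_j+B_0+2C_0 \;=\; d_i^2+d_j^2.
\end{equation*}

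For the converse I argue by contrapositive: suppose $d_i$ and $d_j$ both correspond to reflections purely in direction $i$, with image displacements $A_a,A_b$, and let $d_{ij}$ be any $(O_i+O_j)$-th order reflection with image data $(A_\alpha,B_\beta,C_\gamma)$. After cancelling the common $B_0+C_0$, the equation $d_{ij}^2+d_0^2=d_i^2+d_j^2$ reduces to
\begin{equation*}
A_\alpha+B_\beta+C_\gamma+A_0 \;=\; A_a+A_b+B_0+C_0,
\end{equation*}
and I would split on the directional content of $d_{ij}$. If $d_{ij}$ involves direction $y$ or $z$, then $B_\beta\neq B_0$ or $C_\gamma\neq C_0$, and because these displacements depend on coordinates independent of $(x_s,x_r)$, the right-hand side has no degrees of freedom to absorb them for generic positions. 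The remaining case, in which $d_{ij}$ is also pure in direction $i$, reduces to $A_\alpha+A_0=A_a+A_b$; substituting $A_{q,m}=((-1)^q x_s+2mL_x-x_r)^2$ and using the parity constraints that the orders $O_i,O_j,O_i+O_j$ impose on the sign choices, the identity in $(x_s,x_r,L_x)$ factors into linear forms whose only interior zeros place source or receiver on a wall, contradicting general position.

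The principal obstacle is this last case, where the $d_{ij}$-image lives in the same one-dimensional family as those producing $d_i,d_j$ and algebraic coincidences must be excluded rather than dismissed on dimensional grounds. I plan to handle it by a direct polynomial factorisation: each $A_{q,m}$ is a quadratic in $x_s,x_r$ with parity-dependent linear-in-$L_x$ coefficients, so $A_\alpha+A_0-A_a-A_b$ collapses to a product of linear forms whose only admissible vanishing configurations put source or receiver on the boundary. The mixed-direction cases follow by repeating the same argument axis by axis.
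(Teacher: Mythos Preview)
Your forward direction is exactly the paper's argument: both exploit the per-axis separability of squared image--receiver distances under the image source model, so that $d_{ij}^2+d_0^2=d_i^2+d_j^2$ drops out by substitution. The paper adds, as a geometric gloss, that this is the parallelogram law applied to the parallelogram on the receiver, the $i$-image, the $j$-image, and the combined $ij$-image, but that is the same identity in different clothing.

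For the converse you are more thorough than the paper, which simply asserts that ``when they belong to the same direction, the equation is not valid'' and gives no further justification. Your contrapositive plan---casing on whether the candidate $d_{ij}$ is itself single- or multi-direction and appealing to generic position of the room parameters to rule out coincidences---is the right shape, and your identification of the pure single-direction residue $A_\alpha+A_0=A_a+A_b$ as the only nontrivial case is correct. For first-order $d_i,d_j$ that residue indeed factors, up to constants, as $x_r(x_s-L_x)$ or $x_s(L_x-x_r)$, vanishing only when source or receiver sits on a wall, so your factorisation sketch goes through. It is worth stating explicitly, as you do implicitly, that genericity is genuinely required here: the converse fails on a measure-zero set of configurations, a caveat the paper leaves unmentioned.
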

\begin{proof}
Let the coordinates of source $s$ and receiver $r$ be $(x_s, y_s, z_s)$ and $(x_r, y_r, z_r)$, respectively, and let the room geometry be $L_x \times L_y \times L_z$. The image source positions can then be represented as $(2m_xL_x + (1-2q)x_s, 2m_yL_y + (1-2j)y_s, 2m_zL_z + (1-2k)z_s)$. We assume the image source of path length $d_i$ to be in the $x$ direction and $d_j$ to be in the $y$ direction. The coordinates of the corresponding image sources are $(2m_xL_x + (1-2q)x_s, y_s, z_s)$, and $(x_s, 2m_yL_y + (1-2j)y_s, z_s)$ respectively. The coordinates of the  $(O_i+O_j)$-th order image source are $(2m_xL_x + (1-2q)x_s, 2m_yL_y + (1-2j)y_s, z_s)$. We can then compute the path length $d_i$ between the image source and the receiver as 
\[
d_i^2 = (2m_xL_x + (1-2q)x_s-x_r)^2 + (y_s-y_r)^2 + (z_s-z_r)^2.
\]
The same formulation holds for the other cases. Formulating $d_i$, $d_j$, $d_{ij}$ and $d_0$ in the same form, it is seen that $d_{ij}^2+d_0^2 = d_i^2+d_j^2$ always holds when $d_i$ and $d_j$ belong to path lengths of reflections in different directions. When they belong to the same direction, the equation is not valid.
\end{proof}

Theorem 1 can also be proved with the parallelogram law as Fig. \ref{fig:para1}. Fig. \ref{fig:para} shows the 2D case for better understanding but it is also valid for the 3D case. From Fig. \ref{fig:para1}, we know the distance between the source and the second order image source equals the distance between two first order image sources, and these two parallelograms share one diagonal. Following the parallelogram law, the sum of the squared side lengths of these two parallelograms are equal, i.e., $d_{ij}^2+d_0^2 = d_i^2+d_j^2$. Fig. \ref{fig:para2} is an example where the higher order path length cannot be written as a function of lower order reflections.

\begin{figure}[h]
     \centering
     \begin{subfigure}[b]{0.5\columnwidth}
         \centering
         \includegraphics[width=\textwidth]{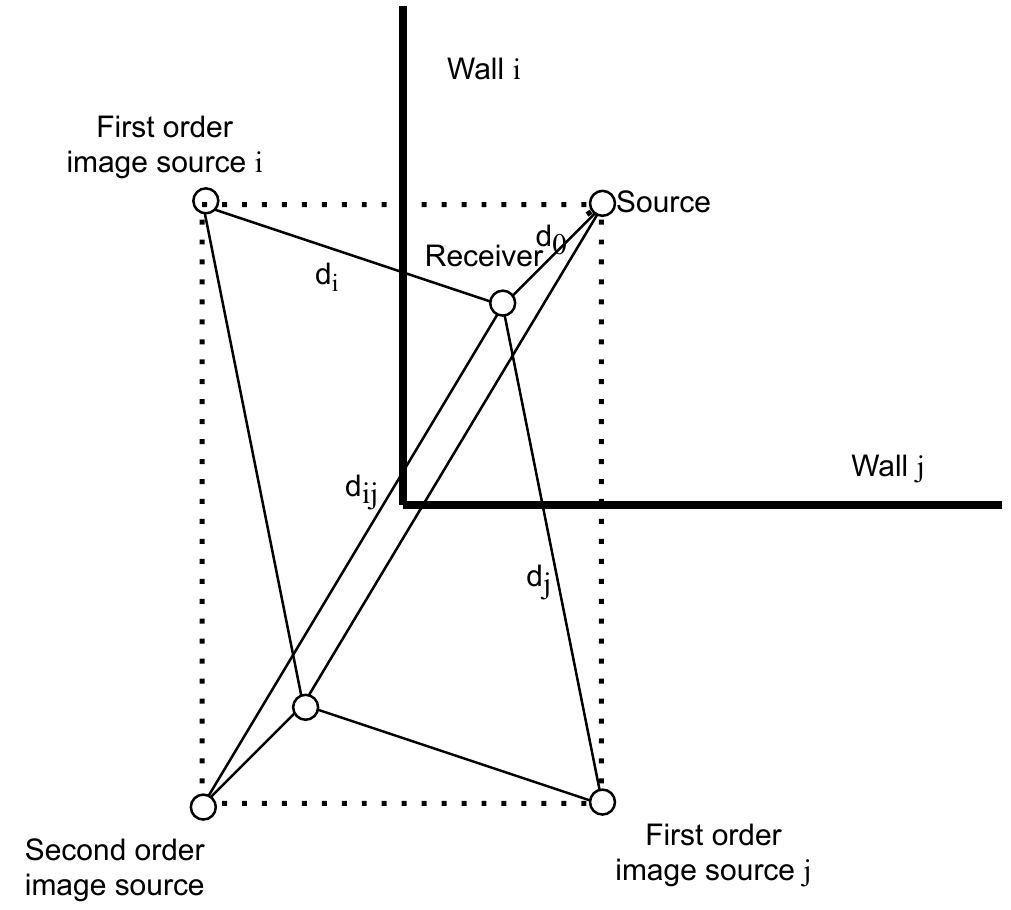}
         \caption{More than one direction}
         \label{fig:para1}
     \end{subfigure}
     \begin{subfigure}[b]{0.35\columnwidth}
         \centering
         \includegraphics[width=\textwidth]{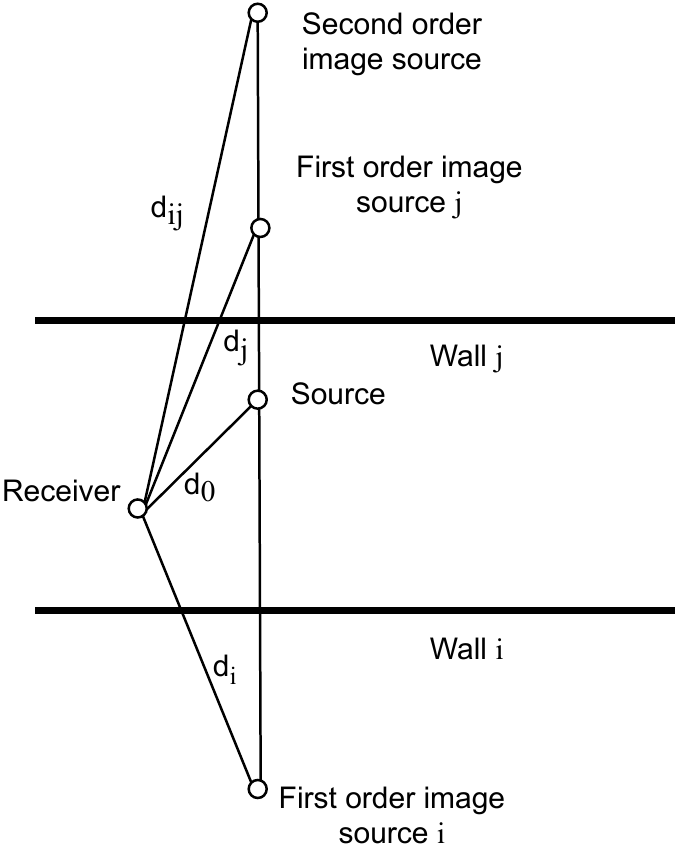}
         \caption{One direction}
         \label{fig:para2}
     \end{subfigure}
     \caption{Second order reflections for the 2D case.}
     \label{fig:para}
\end{figure}
The \textit{multi-direction} set of second (or higher) reflections refers to those that reflect in more than one direction.  The multi-direction set does not provide independent equations since the squared path length can be written as a combination of lower order reflections. The reflections in this set are useless for our purpose and can be pruned out using lower order reflections since they are sure to arrive after the lower order reflections. An example of this is the second order reflection from $(-x_s, -y_s, z_s)$, which reflects on both $x$ and $y$ direction, the squared path length can be written as a combination of the direct path and two first order reflections from $(-x_s, y_s, z_s)$ and $(x_s, -y_s, z_s)$.

The \textit{single-direction} set of second (or higher) order reflections refers to those that reflect along one direction. They can be used to determine the room acoustical parameters. The $N$-th order reflections that reflect along one direction arrive later than the last arrived $(N-1)$-th order reflection in that direction. For example, the second order reflections from $(x_s\pm2L_x, y_s, z_s)$ arrive after the first order reflection from $(2L_x-x_s, y_s, z_s)$. Since the source and receiver are exchangeable, which we will assume below, without loss of generality, we assume $x_r<x_s$, $y_r<y_s$, and $z_r<z_s$ to analyse the sequence of these reflections. As an example, the reflection from $(x_s-2L_x, y_s, z_s)$ arrives earlier than that from $(x_s+2L_x, y_s, z_s)$. 

\subsection{Algorithm of classification of reflections}
\label{sec:algorithm}
In this subsection, we describe our algorithm to classify reflections in detail. We assume we have a set of unlabelled path lengths of reflections in a RIR signal which contain the direct path, the first and second order reflections. We assume the first order reflections are distinguishable. The input set is sorted in ascending order and denoted as $d_k$, $k \in \mathbb{N}$. We aim to classify the path lengths of reflections into five sets, i.e., the first order reflections in the $x$ direction $S_x^1$, the first order reflections in the $y$ direction $S_y^1$, the first order reflections in the $z$ direction $S_z^1$, the second order reflections in a single-direction $S_{\mathrm{single}}^2$, and the multi-direction second order reflections $S_{\mathrm{multi}}^2$. We have $| S_x^1 | =2$, $| S_y^1 | =2$, $| S_z^1 | =2$, $|S_{\mathrm{single}}^2|=6$, and $|S_{\mathrm{multi}}^2|=12$. We introduce a hyperparameter $\delta$ as an error threshold in the path length equation since there might exist a difference between the detected peak position and the theoretical path length. The error threshold depends on the data property and path length.

The first arrived pulse $d_0$ always corresponds to the path length of direct path. Without loss of generality, we assume the path length of the second arrived pulse (first arrived first order reflection) $d_1$ corresponds to $(-x_s, y_s, z_s)$. Let us label $d_1$ as $d_x$. The correctness of this assumption will be explained in Section \Romannum{3}. We iterate over the input set and once a path length of reflection is classified into one set, it will be deleted from the input set. We first find all $d_i$ and $d_j$ that satisfy $d_j \in [ \sqrt{d_i^2+d_x^2-d_0^2} - \delta, \sqrt{d_i^2+d_x^2-d_0^2} + \delta]$ where $1< i < j$. These $d_j$ belong to $S_{\mathrm{multi}}^2$ and without loss of generality we assume the smallest $d_i$ belongs to $S_y^1$. Let us label this path length as $d_y$. We iterate over the remaining found $d_i$ and the remaining path lengths $d_k$ in the input set. For all $d_i$ and $d_k$ that satisfy $d_k \in [ \sqrt{d_i^2 + d_y^2 - d_0^2}-\delta, \sqrt{d_i^2 + d_y^2 - d_0^2}+\delta]$, we have $d_k \in S_{\mathrm{multi}}^2$ and $d_i \in S_z^1$. Then the remaining $d_i$ belongs to $S_y^1$. Till now, we have already found all path lengths of reflections in $S_y^1$ and $S_z^1$. Next, we iterate over the input set again with $d_y$ (this can also be replaced by one of the path lengths in $S_z^1$) to find $d_j$ and $d_i$ that satisfy $d_j \in[ \sqrt{d_i^2+d_y^2-d_0^2}-\delta, \sqrt{d_i^2+d_y^2-d_0^2}+\delta] $. We then have $d_i \in S_x^1$ and $d_j \in S_{\mathrm{multi}}^2$. Lastly, the remaining path length of reflections in the input set are allocated to $S_{\mathrm{single}}^2$.  

The algorithm is robust to mislabelled first or second pulses. If the first two arrived pulses do not correspond to the direct path and the first order reflection in $x$ direction, Theorem 1 does not work for second order reflections and the cardinality of the sets does not match. We can conclude there exist erroneous pulses in the first two arrived pulses. We can then use the path length of the next arrived pulse and select two from these these pulses and repeat the process until the cardinality is correct.

\subsection{Estimation of room geometry and source/receiver positions}

After classifying the reflections, we describe how to compute a valid configuration for the room geometry and positions of source and receiver in this subsection. As discussed, $S_{\mathrm{multi}}^2$ is not useful for this computation. Thus, we only use $S_x^1$, $S_y^1$, $S_z^1$, and $S_{\mathrm{single}}^2$. The arrival sequence of reflections will be explained in section \Romannum{3}.

Our method is based on an iteration of $S_{\mathrm{single}}^2$. This set has six second order reflections in this set, and the coordinates of the image sources are $(x_s\pm2L_x, y_s, z_s)$, $(x_s, y_s\pm2L_y, z_s)$, $(x_s, y_s, z_s\pm2L_z)$. For each of these six elements we perform the following computation. The reflection with the smallest path length in $S_{\mathrm{single}}^2$ has three possible directions. We use each second order reflection candidate, together with the first order reflections in this direction and the direct path, to compute the room geometry and the source and receiver position in this direction. For each second order reflection candidate we then derive the coordinate of another second order reflection in this direction and calculate the corresponding path length, which should be an element of $S_{\mathrm{single}}^2$ for the correct second order reflection candidate. This combination can also be verified with the reflection coefficients in the next subsection.

Let us determine if the hypothesis is correct that a particular distance in $S_{\mathrm{single}}^2$ corresponds to the $x$ direction, i.e., the image source is $(x_s-2L_x, y_s, z_s)$. Together with the path lengths of two first order directions in this direction, with image sources $(-x_s, y_s, z_s)$ and $(2L_x-x_s, y_s, z_s)$, and the path length of direct path from $(x_s, y_s, z_s)$, we can compute $L_x$, $x_s$, and $x_r$. We can then compute the path length of the second order reflection from $(x_s+2L_x, y_s, z_s)$ in this set and if this is consistent with the initial hypothesis, then we have verified it. If so, we also computed the second second order reflection in this direction. This procedure allows us to find the second-order pulses in each direction.

The proposed method is relatively robust in two aspects. Firstly, generalising this algorithm to include additional higher order reflections can improve the robustness of the algorithm. Theorem 1 can also be applied on higher order reflections to classify directions. We can apply the proposed method to the higher order reflections that reflect along one direction. As a result, the higher order reflections can be used to verify the solution of room acoustical parameters to improve robustness. Secondly, the proposed method is robust to additional peaks that do not belong to any reflection. With the algorithm in the previous subsection, these pulses will be misclassified into the set that contain the higher order reflections that reflect along the same direction. Since another second order reflection that also reflects along this direction does not exist, we know this peak is erroneous.

Independent calculation on each direction is an advantage of our proposed method. Since the calculation for each direction is separable, the method can also be applied to some special cases. We can estimate the distance between parallel walls and the source/receiver position along this pair of parallel walls whether the walls on other directions are very distant (such as in a hallway) or affected by furniture. The independent calculation for each direction makes our method work for non-shoebox shaped rooms with sets of parallel wall pairs as long as the required reflections are available. An example is a room with a sloped ceiling. Such a room has two pairs of parallel wall and a pair of non-parallel walls. For the second order reflections between the vertical wall and the floor, since they form a right angle, the second order reflections will be pruned out by the first order reflections. For the second order reflections between the ceiling and the remaining walls, since they do not follow Theorem 1, they will not be pruned out and will be classified into $S_2^{\mathrm{single}}$. However, they will be recognised as erroneous pulses since there is no valid solution of room acoustical parameters.

\subsection{Estimation of reflection coefficients}

In addition to the estimation of the room geometry and positions of source and receiver, we can estimate reflection coefficients of each wall with first order reflections. With the image source method, from \eqref{RIRIS}, we know that the amplitude of each reflection is only related to the distance between the image source and receiver and the reflection coefficients. For the first order image sources, we know their path lengths and the true amplitude from the RIR signal. Since they only reflect once, the amplitude is equal to $\frac{\beta_i}{4\pi d_i}$, where $\beta_i$ is the corresponding reflection coefficient. Hence we can compute the reflection coefficient for each wall. With the reflection coefficients on each wall and the computed parameters, we can then compute the amplitude of second order reflections in each direction. If the computed amplitudes match the measured amplitude, then this confirms the value of the reflection coefficients.

\section{Room impulse response degeneracy analysis}

In Section \ref{sec:paraestimate}, we discussed how to compute room acoustical parameters from a room impulse response. However, the room configuration, including room geometry, positions of source and receiver, reflection coefficients, and the coordinate system, is not unique for a particular RIR. Fig. \ref{2Dconf} is a 2D example where eight different configurations result in an identical RIR. In addition, if we exchange the positions of source and receiver, the RIR will also not change. Hence, for a 2D room, $16$ configurations can result in an identical RIR. Except for this degeneracy, the TOA of reflections of a RIR is unique with respect to a room configuration. 
\begin{figure}[h]
\centerline{\includegraphics[width=0.9\columnwidth]{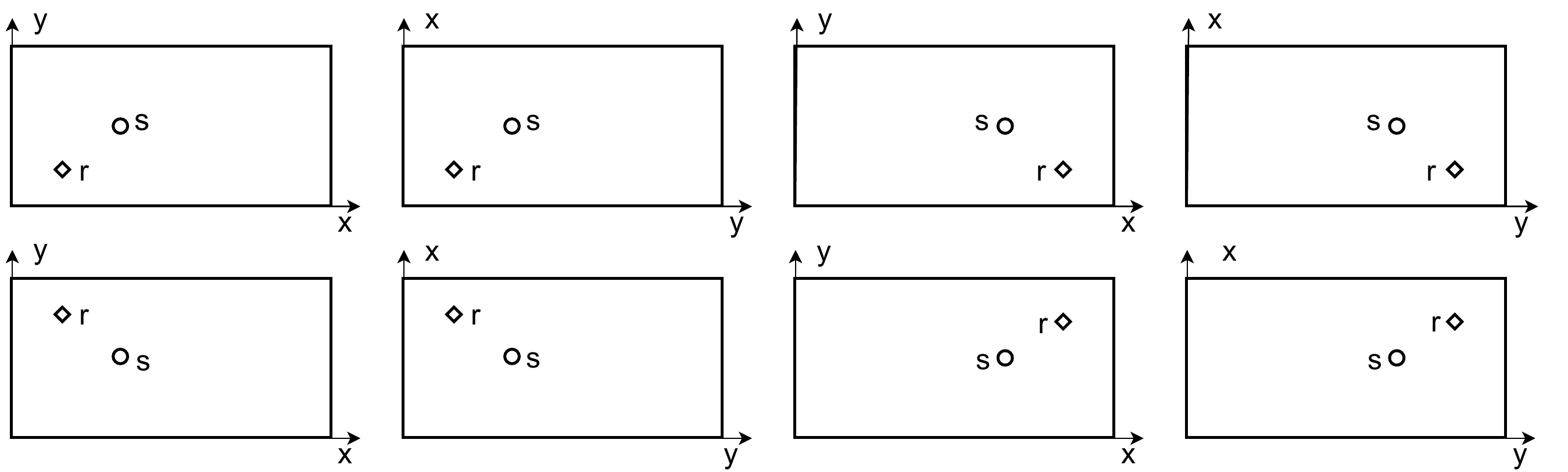}}
\caption{Different 2D configurations resulting in an identical RIR where $s$ and $r$ denote source and receiver.}\label{2Dconf}
\end{figure}

Next, we analyse the degeneracy for a 3D room and determine the coordinate system based on the first order reflections. Let us assume the first reflection is the pulse that reflects on the wall $x = 0$. Since the first pulse can reflect on any wall, this gives us six-fold degeneracy. Similarly, the next arriving first order reflection that is not in the $x$ direction introduces a four-fold degeneracy by assuming it reflects on the wall $y=0$. Finally, the third arriving first-order reflection that is not in the $x$ and $y$ directions has two-fold degeneracy by assuming it reflects on the wall $z=0$. This results in an overall 48-fold degeneracy. In addition, the coordinates of source and receiver are exchangeable, which results in a total of $96$-fold degeneracy. 

\begin{figure}[h]
\centerline{\includegraphics[width=\columnwidth]{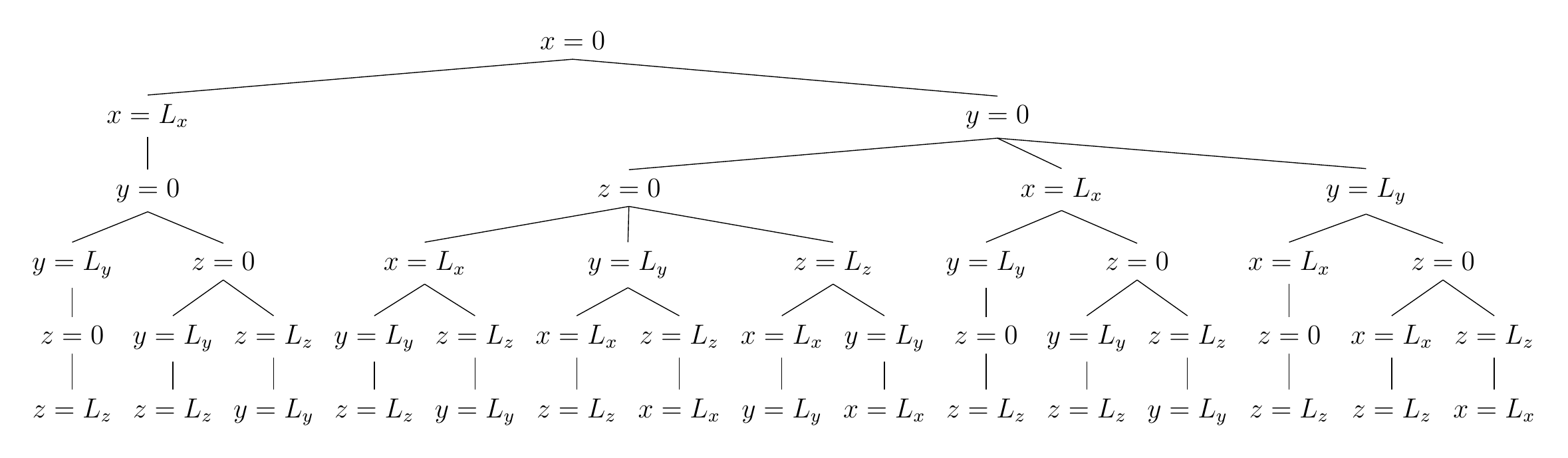}}
\caption{The sequence of first order reflections for the 3D case.}\label{3D_tree}
\end{figure}
We discuss one case as an example how the degeneracies appear in a practical setup since other cases are similar. The case corresponds to one particular branching pathway in Fig. \ref{3D_tree}, which shows the possible orderings of first-order reflections. As discussed, the first arrived first order reflection reflects on $x = 0$, which introduces six-fold degeneracy. We note that the second arrived first order reflection can reflect on $x = L_x$ or in the $y$ or $z$ direction. We consider the case where the second arrived first order reflection reflects on $x = L_x$. We then define the third arrived first order reflection to reflect on $y=0$, which introduces four-fold degeneracy. Then the fourth arrived first order reflection will reflect on $y = L_y$ or in the $z$ direction. We consider that the fourth arrived first order reflection reflects on $y = L_y$. We then define the fifth arrived first order reflection to reflect on $z = 0$ and the sixth arrived first order reflection to reflect on $z = L_z$. Assuming the fifth reflection to reflect on $z=0$ introduces two-fold degeneracy. The exchange of the coordinates of source and receiver results in an overall $96$-fold degeneracy for this branching pathway. For other pathways we find the same result.

Mapping from one of the degenerate solutions to another is straightforward. The methods consist of coordinates exchange of source and receiver, exchange of $x$, $y$, and $z$ coordinates, and symmetry with respect to $x = \frac{L_x}{2}$, $y = \frac{L_y}{2}$, or $z = \frac{L_z}{2}$. For an audio-only environment, it does not matter which case to choose. However, for an audio-visual environment,  it is necessary to match one of the degeneracies with the visual scene for an acceptable experience. We need to use the available visual cues to determine which case and what combination of methods to use, which is out of the scope of this paper.

\section{Experiments}

We evaluated our algorithm in section \ref{sec:algorithm} with the RIRs generated by the image source method without and with scattering \cite{raven}. The scattering coefficient \cite{scattering} was set to be $0.1$ and $0.2$ to compare with the RIR with specular reflections only. The rooms were assumed to be rectangular and empty. We randomly generated $1000$ rooms. Each dimension of room geometry was uniformly distributed between $2 \times 5 \times 7$ m and $4 \times 10 \times 11$ m. One source and one receiver were randomly placed inside the room. The speed of sound was set to $c=343$ m/s. The sampling frequency was set to $fs = 44100$ Hz. The length of each RIR was $127890$ samples, which corresponds to a $2.9$ s signal. The reflection coefficients of the walls were simulated as iid between $0$ and $1$. We assume the bandwidth of the RIR equals to the Nyquist bandwidth and the error threshold was set to $\delta = 5/fs$ .

We recorded the RMSE of the estimated room geometry, source/receiver position and reflection coefficients. The results are shown in Table \ref{tab1}. We also recorded the failed cases when it fails to calculate the room acoustical parameters with the detected reflections and when the RMSE on one of the room acoustical parameters is larger than $1$. The failed proportion of scattering coefficient $0$, $0.1$, and $0.2$ are $0.7\%$, $6.8\%$, and $18.9\%$ respectively. The failed portion can be reduced by repeating the experiments with re-measured RIRs. The experimental results show that larger scattering coefficients will decrease the estimation accuracy of the room acoustical parameter estimation.
\begin{table}[h]
\centering
\caption{RMSE of the estimation of room acoustical parameters.}
\label{tab1}
\begin{tabular}{| c | c |  c | c |}
\hline
Scattering coefficient &  $0$ (No scattering) & $0.1$  & $0.2$\\
\hline
Room geometry (m) & $0.0505$ & $0.2286$ & $0.2633$ \\
\hline
Receiver position (m) & $0.1371$ & $0.1803$ & $0.3088$  \\
\hline
Source position (m) & $0.1739$ & $0.2329$ & $0.5792$  \\
\hline
Reflection coefficients & $0.1386$ & $0.1401$ & $0.1478$  \\
\hline
\end{tabular}
\end{table}

\section{Conclusion}
In this paper, we proposed an analytical method to analyse RIRs and derive the room geometry, the positions of source and receiver, and reflection coefficients simultaneously. We only require the times of arrivals of the direct path, first order reflections and second order reflections of a single RIR between a source and a receiver inside a room with parallel walls and do not require any other information. The proposed method is robust to erroneous pulses and non-specular reflections. However, the proposed method has limitations for real-world applications since it depends on the TOA estimation of early reflections. Further work can focus on improving robustness to real-world applications.

\bibliographystyle{IEEEtran}
\bibliography{refs}

\end{document}